\newtheorem{theorem}{Theorem}[section]
\newtheorem{lemma}[theorem]{Lemma}
\newtheorem{definition}[theorem]{Definition}
\newcommand{\smartparagraph}[1]{\noindent{\bf #1}\ }
\title{Differentially Private Linear Sketches: Efficient Implementations and Applications}
\author{%
  Fuheng Zhao\thanks{Department of Computer Science, UC Santa Barbara.}~~\thanks{The first two authors contributed equally.}\\
  \texttt{fuheng\_zhao@ucsb.edu} \\
  \And
  Dan Qiao\footnotemark[1]~~\footnotemark[2] \\
  \texttt{danqiao@ucsb.edu } \\
  \And
  Rachel Redberg\footnotemark[1] \\
  \texttt{rredberg@ucsb.edu} \\
  \AND
  Divyakant Agrawal\footnotemark[1] \\
  \texttt{agrawal@cs.ucsb.edu} \\
  \And
  Amr El Abbadi\footnotemark[1]\\
  \texttt{amr@cs.ucsb.edu} \\
  \And
  Yu-Xiang Wang\footnotemark[1]\\
  \texttt{yuxiangw@ucsb.edu} \\
}
\begin{document}

\maketitle
\begin{abstract}
Linear sketches have been widely adopted to process fast data streams, and they can be used to accurately answer frequency estimation, approximate top K items, and summarize data distributions. When data are sensitive, it is desirable to provide privacy guarantees for linear sketches to preserve private information while delivering useful results with theoretical bounds. We show that linear sketches can ensure privacy and maintain their unique properties with a small amount of noise added at initialization. From the differentially private linear sketches, we showcase that the state-of-the-art quantile sketch in the turnstile model can also be private and maintain high performance. Experiments further demonstrate that our proposed differentially private sketches are quantitatively and qualitatively similar to noise-free sketches with high utilization on synthetic and real datasets.

\end{abstract}

\section{Introduction}

Data sketches are fundamental tools for data analysis, statistics, and machine learning~\citep{cormode2020small}. Two of the most widely studied problems in data summaries are frequency estimation and quantile approximation. Many real world applications need to estimate the frequency of each item in the database and understand the overall distribution of the database. These applications include stream processing~\citep{das2009thread,bailis2017macrobase}, database management~\citep{misra1982finding,metwally2005efficient, zhao2021spacesaving}, caching~\citep{zakhary2020cot}, system monitoring~\citep{gupta2016network, ivkin2019qpipe, 10.14778/3450980.3450990}, federated learning~\citep{rothchild2020fetchsgd}, among others.

On one hand, the motivation for data sketch algorithms is to efficiently process a large database and extract useful knowledge, since computing the exact information for a large amount of data is both time and memory intensive. For instance, \citet{munro1980selection} proved that to find the true median of a database with $n$ items using $p$ sequential passes requires at least $\Omega(n^{1/p})$ memory. On the other hand, to protect user-level privacy, privacy-preserving algorithms limit the disclosure of private information in the database so that an observer cannot infer much about an individual. Recent works have shown that data sketches can be integrated with privacy-enhancing technologies to provide insightful information and preserve individual privacy at the same time~\citep{cormode2022current}.

Differential privacy~\citep{dwork2006calibrating} is a widely-accepted definition of privacy. Recently, researchers have observed that some data sketches are inherently differentially private~\citep{blocki2012johnson,smith2020flajolet}, while many other data sketches need modifications to the algorithm to be differentially private. In particular, a substantial amount of literature has focused on differentially private data sketches for tasks such as linear algebra~\citep{upadhyay2014differentially, arora2018differentially}, cardinality estimation~\citep{mir2011pan, pagh2020efficient, dickens2022nearly} and quantile approximation~\citep{tzamos2020optimal, gillenwater2021differentially, alabi2022bounded}.

In this paper, we introduce new differentially private algorithms that support both insertions and deletions for frequency, top k, and quantile approximation. While many data sketches assume an insertion-only model~\citep{greenwald2001space, shrivastava2004medians, karnin2016optimal} or a bounded-deletion model~\citep{jayaram2018data,zhao2021spacesaving,10.14778/3450980.3450990}, our algorithms build on top of linear sketches~\citep{charikar2002finding, cormode2005improved} and operate in the turnstile model, which allows an arbitrary number of insertions and deletions into the database. Earlier, researchers attempted to prove CountSketch~\citep{charikar2002finding} itself preserves differential privacy, but the authors acknowledged that there are issues in the proof~\citep{li2019privacy}. Instead of proving that linear sketches, i.e., both Count-Min and CountSketch, are inherently differentially private, we add a small amount of Gaussian noise at their initialization to provide a privacy guarantee, while maintaining linear sketches' original properties, providing high utility for frequency and top K estimations, and keeping update and query algorithms unchanged. \textcolor{black}{We also demonstrate that our analysis provides the tight uniform bound (Section~\ref{sec:main theory result} and Appendix~E).} In addition, we propose the first differentially private quantile sketch in the turnstile model by leveraging the differentially private linear sketch. Our differentially private sketches can be queried an arbitrary number of times without affecting privacy guarantees based on the post-processing immunity. Following prior works~\citep{choi2020differentially, smith2020flajolet}, we assume ideal random hash functions exist, and this assumption can be replaced in practice by
cryptographic hash functions~\citep{dickens2022nearly}.

\section{Preliminaries}

Consider a database $X = \{ i_t \}_{t \in [N]}$ of $N$ items that are drawn from a large $universe$ of size $U$, such as IPv4 address of size $2^{32}$, and for each insert or delete operations, one item can be inserted into or deleted from the database $X$. To support ordered statistic such as quantile, we assume that the $universe$ is some finite totally ordered data universe. 

\begin{definition} Given a database $X$, the frequency of an item $x$ is $f(x) = \sum_{t=1}^{N} \pi(i_t = x)$ where $\pi$ returns 1 if $\text{i}_t$ is $x$, and 0 otherwise.
\end{definition}

\begin{definition} Given a database $X$ of items drawn from an ordered universe, the rank of an item $x$ is $R(x) = \sum_{t=1}^{N} \pi(i_t \leq x)$ where $\pi$ returns 1 if $\text{i}_t$ is less or equal to $x$ and 0 otherwise.
\end{definition}

Given the large size of $N$, calculating the actual statistics, such as frequency and quantile, is often hard, and hence most applications are satisfied with an {\em approximation}. The {\em randomized frequency estimation problem} takes an accuracy parameter $\gamma$ and a failure probability $\beta$ such that, for any item $x$, $|\widehat{f}(x) - f(x)| \leq \gamma \cdot N$ with high probability $1-\beta$, where $\widehat{f}(x)$ is the estimated frequency and $f(x)$ is the true frequency~\citep{cormode2008finding}. In addition, the {\em randomized quantile approximation problem} also takes an accuracy parameter $\gamma$ and a failure probability $\beta$ such that, for any item $x$, $|\widehat{R}(x) - R(x)| \leq \gamma \cdot N$ with high probability $1-\beta$ where $\widehat{R}(x)$ is the estimated rank and $R(x)$ is the actual rank~\citep{karnin2016optimal}.

\subsection{Differential Privacy}

\begin{definition}
Databases $X$ and $X^\prime$ are neighbors ($X \sim X^\prime$), if they differ in at most one element.
\end{definition}

Through this paper, we assume the $update/replace$ definition of differential privacy instead of $add/remove$ definition of differential privacy, in which one item in $X$ is updated or replaced by another item in $X^\prime$~\citep{vadhan2017complexity}.

\begin{definition}[Differential Privacy \citep{dwork2006calibrating}]
A randomized algorithm $M$ satisfies $(\epsilon, \delta)$-differential privacy ($(\epsilon, \delta)$-DP) if for all neighboring databases $X$, $X^\prime$ and for all possible events $E$ in the output range of $M$, we have
\begin{equation*}
    \mathbb{P}(M(X)\in E) \leq e^\epsilon \cdot \mathbb{P}(M(X^\prime)\in E) + \delta.
\end{equation*}
\end{definition}

When $\delta=0$, $\epsilon$-DP is known as pure DP, and when $\delta>0$, ($\epsilon$, $\delta$)-DP is known as approximate DP.

\begin{definition}[Gaussian Mechanism \citep{dwork2006calibrating}]\label{def:gaussian}
Define the $\ell_2$ sensitivity of a function $f:\mathbb{N}^{\mathcal{X}}\mapsto\mathbb{R}^d$ as
\begin{align*}
    \Delta_2(f)=\sup_{\text{neighboring}\,X,X^{\prime}}\|f(X)-f(X^{\prime})\|_2.
\end{align*}
The Gaussian mechanism $\mathcal{M}$ with noise level $\sigma$ is then given by
\begin{align*}
    \mathcal{M}(X) = f(X) + \mathcal{N}(0, \sigma^2 I_d).
\end{align*}
\end{definition}

Specifically, the Gaussian mechanism is known to satisfy a stronger notion of privacy known as  zero-concentrated differential privacy (zCDP, defined below); zCDP lies between pure and approximate DP and can be translated into standard DP notations, as shown in Lemma~\ref{lem:zcdptodp}. Moreover, zCDP satisfies cleaner composition theorems, as shown in Lemma~\ref{lem:adaptive_com}.

\begin{definition} [zCDP \citep{dwork2016concentrated,bun2016concentrated}] A randomized mechanism $M$ satisfies $\rho$-Zero-Concentrated Differential Privacy ($\rho$-zCDP), if for all neighboring databases $X,X^{\prime}$ and all $\alpha\in (1,\infty)$,
$$D_{\alpha}(M(X)\| M(X^{\prime}))\leq \rho \alpha,$$
where $D_{\alpha}$ is the Renyi divergence~\citep{van2014renyi}.
\end{definition}

\begin{lemma}[Adaptive composition and Post Processing of zCDP \citep{bun2016concentrated}]\label{lem:adaptive_com}
Let $M : \mathcal{X}^n\rightarrow \mathcal{Y}$ and $M^\prime: \mathcal{X}^n\times\mathcal{Y}\rightarrow\mathcal{Z}$. Suppose $M$ satisfies $\rho$-zCDP and $M^\prime$ satisfies $\rho^\prime$-zCDP (as a function of its first argument). Define $M^{\prime\prime} : \mathcal{X}^n\rightarrow \mathcal{Z}$ by $M^{\prime\prime}(x) = M^\prime
(x, M(x))$. Then $M^{\prime\prime}$ satisfies $(\rho+\rho^\prime)$-zCDP.
\end{lemma}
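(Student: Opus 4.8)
The plan is to lift the problem to the joint output $(M(X), M''(X))$ and then collapse back to $M''(X)$ via the data-processing inequality for R\'enyi divergence. Fix neighboring $X \sim X'$, and write $Y = M(X)$, $Z = M'(X,Y)$ (with $M'$ using fresh internal randomness), and define $Y', Z'$ analogously from $X'$; note $Z = M''(X)$ and $Z' = M''(X')$. Since $Z$ is a deterministic function (the projection) of the pair $(Y,Z)$, data processing gives $D_\alpha(M''(X)\|M''(X')) = D_\alpha(Z\|Z') \le D_\alpha\big((Y,Z)\,\|\,(Y',Z')\big)$ for every $\alpha \in (1,\infty)$, so it suffices to bound the joint divergence by $(\rho+\rho')\alpha$; the conclusion then follows from the definition of $\rho$-zCDP.

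For the joint bound I would expand the exponentiated R\'enyi divergence and factor the density ratio by the chain rule $p(y,z) = p(y)\,p(z\mid y)$:
\[
e^{(\alpha-1)D_\alpha((Y,Z)\|(Y',Z'))} = \sum_y p_Y(y)\Big(\tfrac{p_Y(y)}{q_{Y'}(y)}\Big)^{\alpha-1}\sum_z p_{Z\mid Y}(z\mid y)\Big(\tfrac{p_{Z\mid Y}(z\mid y)}{q_{Z'\mid Y'}(z\mid y)}\Big)^{\alpha-1}.
\]
The inner sum equals $e^{(\alpha-1)D_\alpha(M'(X,y)\,\|\,M'(X',y))}$, and this is where the hypotheses are used: $M'$ is $\rho'$-zCDP in its first argument for \emph{every} fixed second argument $y$, so the inner factor is at most $e^{(\alpha-1)\rho'\alpha}$ uniformly in $y$. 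Pulling that constant out leaves $e^{(\alpha-1)\rho'\alpha}\sum_y p_Y(y)\big(p_Y(y)/q_{Y'}(y)\big)^{\alpha-1} = e^{(\alpha-1)\rho'\alpha}\,e^{(\alpha-1)D_\alpha(M(X)\|M(X'))}$, which is at most $e^{(\alpha-1)(\rho+\rho')\alpha}$ by $\rho$-zCDP of $M$. Taking logarithms and dividing by $\alpha-1$ gives $D_\alpha((Y,Z)\|(Y',Z')) \le (\rho+\rho')\alpha$, as needed.

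The crux is the interplay made possible by the \emph{order-uniform} nature of zCDP: the outer sum contributes an order-$\alpha$ R\'enyi divergence of $M$ while the inner factor must be controlled at the \emph{same} order $\alpha$ for $M'$, and it is exactly this simultaneous-in-$\alpha$ guarantee (which $(\epsilon,\delta)$-DP lacks) that lets the two budgets add without loss. The remaining points are routine: in the non-discrete setting the sums become integrals over densities with respect to a common dominating measure and the argument is unchanged, boundary cases where a conditional ratio is $0/0$ are handled by the standard R\'enyi-divergence conventions, and the post-processing statement is just the case $\rho' = 0$, since a mechanism whose output is independent of the data has zero R\'enyi divergence between any pair of databases.
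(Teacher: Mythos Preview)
Your argument is correct and is essentially the standard proof from \citep{bun2016concentrated}: bound the R\'enyi divergence of the joint output via the chain-rule factorization, apply the $\rho'$-zCDP assumption on $M'$ uniformly in the auxiliary input $y$ to control the inner factor, then use $\rho$-zCDP of $M$ on the remaining outer sum, and finish with data processing to pass from $(Y,Z)$ to $Z$. The paper itself does not supply a proof of this lemma---it simply cites \citep{bun2016concentrated}---so there is nothing further to compare.
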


\begin{lemma}[Privacy Guarantee of Gaussian mechanism \citep{dwork2014algorithmic,bun2016concentrated}]\label{lem:gau_mechanism}
Let $f:\mathbb{N}^{\mathcal{X}}\mapsto\mathbb{R}^d$ be an arbitrary d-dimensional function with $\ell_2$ sensitivity $\Delta_2=\sup_{\text{neighboring}\,X,X^{\prime}}\|f(X)-f(X^{\prime})\|_2$. Then for any $\rho>0$, Gaussian Mechanism with parameter $\sigma^2=\frac{\Delta^2_2}{2\rho}$ satisfies $\rho$-zCDP.
\end{lemma}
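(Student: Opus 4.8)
The plan is to verify the definition of $\rho$-zCDP directly: for every pair of neighbors $X\sim X'$ and every $\alpha\in(1,\infty)$ we must show $D_\alpha(\mathcal{M}(X)\|\mathcal{M}(X'))\le\rho\alpha$. By construction $\mathcal{M}(X)$ is distributed as $\mathcal{N}(f(X),\sigma^2 I_d)$ and $\mathcal{M}(X')$ as $\mathcal{N}(f(X'),\sigma^2 I_d)$, so the whole problem reduces to computing the R\'enyi divergence between two $d$-dimensional Gaussians that share the covariance $\sigma^2 I_d$ and differ only in their means.

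First I would reduce to one dimension. Set $v:=f(X)-f(X')$; if $v=0$ the divergence is zero and there is nothing to prove, so assume $v\ne 0$. Choosing an orthonormal basis of $\mathbb{R}^d$ whose first element is $v/\|v\|_2$, both Gaussians become products of independent one-dimensional Gaussians that agree in every coordinate except the first. Since R\'enyi divergence is additive over product measures, this gives $D_\alpha(\mathcal{M}(X)\|\mathcal{M}(X'))=D_\alpha(\mathcal{N}(\|v\|_2,\sigma^2)\,\|\,\mathcal{N}(0,\sigma^2))$.

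Next I would evaluate this scalar divergence from the definition $D_\alpha(P\|Q)=\frac{1}{\alpha-1}\log\int p(x)^\alpha q(x)^{1-\alpha}\,dx$. Substituting the two Gaussian densities, the integrand is an unnormalized Gaussian in $x$, so the integral is elementary and yields the standard closed form $D_\alpha(\mathcal{N}(\mu_1,\sigma^2)\|\mathcal{N}(\mu_2,\sigma^2))=\frac{\alpha(\mu_1-\mu_2)^2}{2\sigma^2}$. Hence $D_\alpha(\mathcal{M}(X)\|\mathcal{M}(X'))=\frac{\alpha\|v\|_2^2}{2\sigma^2}=\frac{\alpha\|f(X)-f(X')\|_2^2}{2\sigma^2}$.

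Finally I would substitute the sensitivity bound $\|f(X)-f(X')\|_2\le\Delta_2$ together with the prescribed noise level $\sigma^2=\Delta_2^2/(2\rho)$, obtaining $D_\alpha(\mathcal{M}(X)\|\mathcal{M}(X'))\le\frac{\alpha\Delta_2^2}{2\sigma^2}=\rho\alpha$, uniformly in $\alpha$ and in the choice of neighbors, which is exactly the $\rho$-zCDP condition. The only step requiring real computation is the scalar Gaussian R\'enyi divergence; everything else is a change of basis plus additivity over products, so the main obstacle---such as it is---is just carrying out (or citing) that one Gaussian integral correctly.
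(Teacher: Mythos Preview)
Your proof is correct and is precisely the standard argument given in the cited references (in particular \citep{bun2016concentrated}): compute the R\'enyi divergence between two isotropic Gaussians sharing covariance $\sigma^2 I_d$, reduce to one dimension by rotation and additivity over products, evaluate the scalar Gaussian R\'enyi divergence as $\alpha(\mu_1-\mu_2)^2/(2\sigma^2)$, and then bound by the sensitivity. Note, however, that the paper itself does not supply a proof of this lemma at all---it is stated in the preliminaries as a known result with citations to \citep{dwork2014algorithmic,bun2016concentrated} and is used as a black box throughout---so there is no ``paper's own proof'' to compare against; you have simply (and correctly) reproduced the textbook argument.
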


\begin{lemma}[Converting zCDP to DP \citep{bun2016concentrated}]\label{lem:zcdptodp}
If M satisfies $\rho$-zCDP then M satisfies $(\rho+2\sqrt{\rho\log(1/\delta)},\delta)$-DP.
\end{lemma}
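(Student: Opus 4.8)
The plan is to route through the privacy loss random variable together with a Chernoff-type tail bound. Fix neighboring databases $X,X'$, let $P$ and $Q$ denote the output distributions of $M(X)$ and $M(X')$, and define the privacy loss $Z=\log\frac{dP}{dQ}(o)$ for $o\sim P$. The first step is to rewrite the Rényi divergence as a moment generating function: for every $\alpha\in(1,\infty)$,
\[
\mathbb{E}_{o\sim P}\!\left[e^{(\alpha-1)Z}\right]=e^{(\alpha-1)D_{\alpha}(P\|Q)}\le e^{(\alpha-1)\rho\alpha},
\]
where the equality is the definition of $D_{\alpha}$ and the inequality is the $\rho$-zCDP hypothesis.

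Second, I would apply Markov's inequality to the nonnegative variable $e^{(\alpha-1)Z}$: for any threshold $\epsilon$,
\[
\Pr_{o\sim P}[Z>\epsilon]\le e^{-(\alpha-1)\epsilon}\,\mathbb{E}_{o\sim P}\!\left[e^{(\alpha-1)Z}\right]\le e^{(\alpha-1)(\rho\alpha-\epsilon)}.
\]
Third, I would optimize the exponent over $\alpha$. The map $\alpha\mapsto(\alpha-1)(\rho\alpha-\epsilon)$ is a parabola minimized at $\alpha^{\star}=\frac{\rho+\epsilon}{2\rho}$ with minimum value $-\frac{(\epsilon-\rho)^{2}}{4\rho}$, and $\alpha^{\star}>1$ exactly when $\epsilon>\rho$, which holds because $\delta<1$. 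Choosing $\epsilon=\rho+2\sqrt{\rho\log(1/\delta)}$ makes $(\epsilon-\rho)^{2}=4\rho\log(1/\delta)$, so the minimum value equals $-\log(1/\delta)$ and hence $\Pr_{o\sim P}[Z>\epsilon]\le\delta$.

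Finally, I would convert this tail bound on the privacy loss into $(\epsilon,\delta)$-DP in the usual measure-theoretic way: for any event $E$, split on the bad set $S=\{o:\log\frac{dP}{dQ}(o)>\epsilon\}$ to get $P(E)\le P(S)+\int_{E\setminus S}dP\le\delta+e^{\epsilon}Q(E\setminus S)\le e^{\epsilon}Q(E)+\delta$. Since the neighboring relation is symmetric, the zCDP bound also holds for $D_{\alpha}(Q\|P)$, so repeating the argument with $X$ and $X'$ exchanged yields the reverse inequality, establishing $(\epsilon,\delta)$-DP. The one step that needs genuine care is the optimization: checking that the unconstrained minimizer $\alpha^{\star}$ is admissible in $(1,\infty)$ and that it produces precisely the constant $2\sqrt{\rho\log(1/\delta)}$ (rather than a looser bound from a suboptimal $\alpha$); the Markov step and the splitting argument are otherwise routine.
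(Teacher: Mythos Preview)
Your argument is correct and is essentially the standard proof from \citet{bun2016concentrated}: control the moment generating function of the privacy loss via the R\'enyi divergence bound, apply Markov's inequality, optimize the exponent over $\alpha$, and finish with the usual split on the bad set $\{Z>\epsilon\}$. The paper itself does not supply a proof of this lemma; it is quoted as a known result with the citation, so there is nothing further to compare.
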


As we use exclusively Gaussian mechanisms and their composition in our proposed algorithms, our method actually satisfies $(\epsilon,\delta)$-DP guarantees with stronger parameters than what is implied by zCDP via techniques from  \citep{balle2018improving,dong2019gaussian}, which reduces the $\epsilon$ parameter by a sizable fraction in typical parameter regimes. We stick to zCDP for clarity and generality, because all our results would apply without changes if we modify the noise into other mechanisms satisfying zCDP, e.g., the Discrete Gaussian Mechanism \citep{canonne2020discrete}.



\subsection{Revisiting Linear Sketches}

			
			
			

\citet{charikar2002finding} proposed the \textbf{CountSketch}, a randomized algorithm that summarizes a database and solves the frequency estimation problem. The CountSketch uses a $d \times w$ array of counters, i.e, C[d, w], where all the counters are initialized to \textbf{zero}, and has two sets of independent hash functions $h$ and $g$. For each row $r$, the hash function $h_{r}$ maps input items uniformly onto $\{1,\dots,w\}$ and the hash function $g_{r}$ maps input items uniformly onto $\{-1, +1\}$. 
For item $x$ with value $v \in \{-1, +1\}$, CountSketch updates $d$ counters, one per each row, based on the hash values such that for a particular row $r$, $g_{r}(x)$ will be added or subtracted to the counter at the $h_{r}(x)^{th}$ index depending on whether $x$ is being inserted or deleted respectively, as shown in Algorithm~\ref{alg:update}. Hence, the update time is $O(d)$. To estimate the frequency of item $x$, CountSketch will output the $\mathrm{median}_{1\leq r \leq d}$ $g_{r}(x)\cdot C[r, h_r(x)]$, as shown in Algorithm~\ref{alg:query}. By updating each row's counter based on the hashed value of either $1$ or $-1$ and reporting the median for query, CountSketch provides an unbiased estimate. To reduce the failure probability of bad estimations, $d$ is set to $O(\log(1/\beta))$ and it uses $O(\frac{1}{\gamma}\log(\frac{1}{\beta}))$ space to solve the frequency estimation problem.

\begin{algorithm}[H]
	\caption{Linear Sketch Update$(x, v)$}
	\label{alg:update}
	\small{
		\begin{algorithmic}[1]
			\STATE {\bfseries Input:} Item $x$ with value $v \in \{-1, +1\}$, counter arrays $C$, and two sets of hash functions $\{h_1, \ldots, h_{C.rows}\}$ and $\{g_1, \ldots, g_{C.rows}\}$.
			
			\FOR{$r \gets 1,2,\ldots,C.rows$}
			    \STATE $C[r,h_r(x)] \gets C[r,h_r(x)] + v \cdot g_r(x)$
			\ENDFOR
			\STATE {\bfseries Output: $C$. } 
			
		\end{algorithmic}
	}
\end{algorithm}

\citet{cormode2005improved} proposed the \textbf{Count-Min} sketch that shares the same initialization, update, and data structure as CountSketch. Count-Min sketch also uses $O(\frac{1}{\gamma}\log(\frac{1}{\beta}))$ space to solve the frequency estimation problem. A major difference is that Count-Min sketch makes all hash functions in set $g$ return positive 1. As a result, to estimate the frequency of item $x$, Count-Min sketch returns $\min_{1\leq r \leq d}C[r, h_r(x)]$ instead of the median, as shown in Algorithm~\ref{alg:query}. In addition, it has the nice property of never underestimating item's frequency. Since linear sketches can approximate an item's frequency accurately, they also solves the top K approximation problem by returning the K items associated with the highest estimated frequency.

\begin{algorithm}[H]
	\caption{Linear Sketch Query$(x)$}
	\label{alg:query}
	\small{
		\begin{algorithmic}[1]
			\STATE {\bfseries Input:} Item $x$, counter arrays $C$, and two sets of hash functions $\{h_1, \ldots, h_{C.rows}\}$ and $\{g_1, \ldots, g_{C.rows}\}$.
		
		\STATE $\mathrm{arr} \gets [\,]$
		
		\FOR{$r \gets 1,2,\ldots,C.rows$}
			\STATE $\mathrm{arr.append}(g_r(x) \cdot C[r, h_r(x)])$
			
	    \ENDFOR
		\STATE {\bfseries Output: $\min(\mathrm{arr})$ for Count-Min or $\mathrm{median}(\mathrm{arr})$ for CountSketch. } 
		\end{algorithmic}
	}
\end{algorithm}

\citet{gilbert2002summarize} made the connection between frequency and quantiles, in which the quantile range can be decomposed into at most $\log U$ dyadic intervals ~\citep{cormode2019answering} and the sum of the estimated frequencies for these intervals gives the estimated rank.~\citet{wang2013quantiles} leveraged the unbiased property of CountSketch and proposed the Dyadic CountSketch (DCS) to estimate the frequencies of each dyadic interval. For more specific details, Appendix~B and~\citep{cormode2020small} provide a comprehensive analysis of quantile sketches.

\section{Private Linear Sketches}

In this section, we present new algorithms for differentially private linear sketches. We highlight that our Private Count-Min and CountSketch only require a different initialization while they share the same update (Algorithm~\ref{alg:update}) and query (Algorithm \ref{alg:query}) with the original Count-Min and CountSketch. Therefore, the implementation of our algorithms is efficient. Below we show our private initialization.

\begin{algorithm}[H]
	\caption{DP Linear Sketch Initialization with Gaussian Noise}
	\label{alg:DP-initialization}
	\small{
		\begin{algorithmic}[1]
			\STATE {\bfseries Input:} Desired accuracy parameter $\gamma$, failure probability $\beta$, and budget for zCDP $\rho$.
			
			\STATE {\bfseries Initialize Counter Arrays}
			
			\STATE $\sigma \gets \sqrt{\log(2/\beta)/\rho}$
			\STATE $E \gets \sqrt{\frac{2\log\frac{2}{\beta}}{\rho}}\cdot\sqrt{\log\frac{\frac{4}{\gamma}\log(\frac{2}{\beta})}{\beta}}$
			\FOR{$r \gets 1,2,\ldots,\log(2/\beta)$}
			    \FOR{$c \gets 1,2,\ldots,1/\gamma$}
			\STATE $C[r,c] \gets \mathcal{N}(0, \sigma^2)$ if Private CountSketch
			\STATE $C[r,c] \gets E+\mathcal{N}(0, \sigma^2)$ if Private Count-Min
			
			    \ENDFOR
			\ENDFOR
			
			\STATE {\bfseries Output: $C$. } 
			
		\end{algorithmic}
	}
\end{algorithm}

In Algorithm~\ref{alg:DP-initialization}, the set of arrays we use is $C$ which consists of $\log(2/\beta)$ arrays with length $1/\gamma$, which has the same space complexity as original Count-Min and CountSketch. Recall that two neighboring databases $X$ and $X^{\prime}$ differ by at most one item. Therefore, after updating all the items respectively, for each corresponding array in $C(X)$ and $C(X^{\prime})$, they differ by at most two elements and the difference is at most $1$. Then the $\ell_2$-sensitivity of the set of arrays $C$ is bounded by 
\begin{equation}
\Delta_{2}=\sqrt{2\log(2/\beta)}.
\end{equation}

By applying the Gaussian Mechanism (Definition \ref{def:gaussian}), we can add \emph{independent} Gaussian noises $\mathcal{N}(0,\sigma^{2})$ to each counter in $C$, where $\sigma=\sqrt{\frac{\log(2/\beta)}{\rho}}$. Due to the privacy guarantee of Gaussian Mechanism (Lemma \ref{lem:gau_mechanism}), it satisfies $\frac{\Delta_{2}^{2}}{2\sigma^{2}}=\rho$-zCDP. 

Define $E(\beta,\gamma,\rho)=\sqrt{\frac{2\log\frac{2}{\beta}}{\rho}}\cdot\sqrt{\log\frac{\frac{4}{\gamma}\log(\frac{2}{\beta})}{\beta}}$, for simplicity, we will use $E$ in Algorithm~\ref{alg:DP-initialization} and the proof in Appendix~A. The private version of Count-Min can be derived by adding \emph{independent} Gaussian noises $\mathcal{N}(E,\sigma^{2})$ to each counter of $C$, while the private version of CountSketch can be derived by adding \emph{independent} Gaussian noises $\mathcal{N}(0,\sigma^{2})$ to each counter of $C$. The private versions of Count-Min and CountSketch are derived by combining Algorithm~\ref{alg:DP-initialization}, Algorithm~\ref{alg:update}, and Algorithm~\ref{alg:query}.


\subsection{Main results about Private Count-Min and CountSketch}
\label{sec:main theory result}

We present the privacy guarantee and utility analysis of our Private Count-Min and CountSketch below. Recall that for each item $x$, we perform update as in Algorithm \ref{alg:update} and query as in Algorithm~\ref{alg:query}. In addition, $\widehat{f}(x)$ is the output estimated frequency and $f(x)$ is the actual frequency. \textcolor{black}{To provide a bound for the additional error due to DP, we define $\widetilde{f}(x)$ to be the non-private estimated frequency (the output of the original Count-Min and CountSketch with the same set of hash functions).} We begin with the properties of Private Count-Min. Note that all the proofs are deferred to Appendix~A.


\begin{theorem}\label{thm:min}
Private Count-Min satisfies $\rho$-zCDP regardless of the number of queries. Furthermore, with probability $1-\beta$, the output $\widehat{f}(x)$ satisfies that 
$$\forall\, x, 0\leq\widehat{f}(x)-\widetilde{f}(x)\leq 2E.$$
In addition, for each item $x$, with probability $1-\beta$, $$0\leq \widehat{f}(x)-f(x)\leq \gamma\cdot N+2E.$$
\end{theorem}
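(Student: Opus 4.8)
\emph{Plan.} I would establish the three assertions in turn — $\rho$-zCDP, the DP-overhead bound $0\le\widehat f(x)-\widetilde f(x)\le 2E$, and the accuracy bound — since the second drives the third. For privacy, I would view Private Count-Min as the Gaussian mechanism applied to the data-dependent final non-private array $\widetilde C(X)$, followed only by post-processing. The constant offset $E$ added to every cell is data-independent, hence irrelevant to the R\'enyi-divergence computation, so it remains to bound $\Delta_2(\widetilde C)$: for $X\sim X'$ differing by one update/replace and with $g_r\equiv 1$, each of the $d=\log(2/\beta)$ rows changes in at most two cells and by at most $1$ per cell, so $\|\widetilde C(X)-\widetilde C(X')\|_2\le\sqrt{2\log(2/\beta)}$. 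Substituting $\sigma^2=\log(2/\beta)/\rho=\Delta_2^2/(2\rho)$ into Lemma~\ref{lem:gau_mechanism} yields $\rho$-zCDP for the released array, and since every call to Algorithm~\ref{alg:query} is a deterministic function of that array and of the data-independent hash functions, Lemma~\ref{lem:adaptive_com} (post-processing) shows the guarantee survives an arbitrary, even adaptively chosen, sequence of queries.

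\emph{DP overhead.} Conditioning on the hash functions, write $C[r,c]=\widetilde C[r,c]+(E+Z_{r,c})$ with the $Z_{r,c}$ i.i.d.\ $\mathcal N(0,\sigma^2)$. The key observation is that, although the claim is over all $x$, every query reads only cells of the $d\times w$ grid ($dw=\tfrac1\gamma\log\tfrac2\beta$ of them), so it suffices to control those. A Gaussian tail bound gives $\mathbb P(|Z_{r,c}|>E)\le 2\exp(-E^2/2\sigma^2)$, and the definition of $E$ is calibrated so that $E^2/2\sigma^2=\log\!\big(\tfrac{(4/\gamma)\log(2/\beta)}{\beta}\big)$; a union bound over the $dw$ cells then shows that, with probability at least $1-\beta/2$, every cell satisfies $0\le E+Z_{r,c}\le 2E$. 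On that event, for any $x$ and any row $r$ the per-row value $C[r,h_r(x)]$ lies between $\widetilde C[r,h_r(x)]$ and $\widetilde C[r,h_r(x)]+2E$; since adding a coordinatewise quantity in $[0,2E]$ can neither lower a minimum nor raise it by more than $2E$, taking $\min_r$ gives $\widetilde f(x)\le\widehat f(x)\le\widetilde f(x)+2E$ for every $x$ simultaneously.

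\emph{Accuracy.} I would combine the previous bound with the textbook Count-Min analysis. In the strict turnstile model ($f(y)\ge 0$ for all $y$) each per-row counter equals $f(x)+\sum_{y\ne x:\,h_r(y)=h_r(x)}f(y)\ge f(x)$, so $\widetilde f(x)\ge f(x)$ deterministically; and a Markov bound on the collision mass, amplified over the $\log(2/\beta)$ independent rows by the $\min$, gives $\widetilde f(x)\le f(x)+\gamma N$ for a fixed $x$ with probability at least $1-\beta/2$. Taking a union bound with the $\beta/2$-probability noise event above yields, for each fixed $x$, the chain $f(x)\le\widetilde f(x)\le\widehat f(x)\le\widetilde f(x)+2E\le f(x)+\gamma N+2E$ with probability at least $1-\beta$.

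\emph{Main obstacle.} The privacy reduction and the classical Count-Min tail bound are routine. The step that needs care is the DP-overhead bound: recognizing that the union bound ranges over the $dw$ grid cells rather than the whole universe, verifying that the stated formula for $E$ is exactly what pushes $dw\cdot 2\exp(-E^2/2\sigma^2)$ below $\beta$, and checking that a uniform offset of size at most $2E$ passes cleanly through the $\min$ that defines the Count-Min estimate.
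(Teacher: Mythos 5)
Your proposal is correct and follows essentially the same route as the paper's proof: privacy via the $\sqrt{2\log(2/\beta)}$ sensitivity bound, the Gaussian mechanism, and post-processing; the overhead bound via a union bound over the $dw$ grid cells showing every noise term lies in $[0,2E]$ and then passing this through the $\min$; and the accuracy bound by stacking this on the classical Count-Min guarantee. Your explicit verification that the formula for $E$ is calibrated so the union bound lands at $\beta/2$, and your remark that the data-independent offset $E$ is irrelevant to the privacy computation, are slightly more detailed than the paper's write-up but not a different argument.
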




\textbf{Comparison to Count-Min.} Comparing our Theorem \ref{thm:min} with the conclusion in \citep{cormode2005improved}, our Private Count-Min preserves the nice property that the output will not underestimate the frequency with high probability. Furthermore, within the most popular regime where the privacy budget $\rho$ is a constant, the additional error bound due to differential privacy is independent of the size of database $N$, therefore it will become negligible as $N$ goes large.

\textbf{Justification of our Gaussian noise.} Note that with high probability, all the noises we add ($E+\sigma_{i,j}$, $\sigma_{i,j}\sim \mathcal{N}(0,\sigma^2)$) will be non-negative. Therefore, the noise we add and the original error induced by Count-Min will directly sum up and lead to larger error in evaluation. However, we claim that the additional $E$ ensures that with high probability, for all item $x$, the output will not underestimate the actual frequency. This nice property enables the good performance of our Private Count-Min when used in approximate top $k$ task, as shown in Section \ref{sec:evaluation}. 

Next, Theorem \ref{thm:median} shows the properties of Private CountSketch.
\begin{theorem}\label{thm:median}
Private CountSketch satisfies $\rho$-zCDP regardless of the number of queries. Furthermore, the frequency query from Private CountSketch is unbiased and with probability $1-\beta$, 
$$\forall\, x, |\widehat{f}(x)-\widetilde{f}(x)|\leq E.$$
In addition, for each item $x$, with probability $1-\beta$,
$$|\widehat{f}(x)-f(x)|\leq \gamma\cdot N+E.$$
\end{theorem}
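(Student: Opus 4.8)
The plan is to split the statement into three pieces — the privacy guarantee, the deviation $|\widehat f(x)-\widetilde f(x)|$ (which will also deliver unbiasedness), and the accuracy $|\widehat f(x)-f(x)|$ — and to obtain the last one by bootstrapping off the non-private CountSketch guarantee recalled in Section~2. Throughout I would fix the (data-independent) hash functions $h_1,\dots,h_d,g_1,\dots,g_d$ and argue for an arbitrary fixed choice. For privacy: since the update rule in Algorithm~\ref{alg:update} is additive, running it over every item of $X$ from the noisy initialization of Algorithm~\ref{alg:DP-initialization} produces exactly $C=\widetilde C(X)+Z$, where $\widetilde C(X)$ is the deterministic noise-free CountSketch of $X$ and $Z$ is the array of i.i.d.\ $\mathcal N(0,\sigma^2)$ entries drawn at initialization (the noise commutes past the updates). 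Hence $C$ is the Gaussian mechanism applied to $X\mapsto\widetilde C(X)$, whose $\ell_2$-sensitivity is $\Delta_2\le\sqrt{2\log(2/\beta)}$ as argued in Section~3 (a neighboring change touches at most two cells per row, each by at most $1$). With $\sigma^2=\Delta_2^2/(2\rho)=\log(2/\beta)/\rho$, Lemma~\ref{lem:gau_mechanism} gives $\rho$-zCDP; and since every query (Algorithm~\ref{alg:query}) is a deterministic function of $C$ and the public hashes, the post-processing part of Lemma~\ref{lem:adaptive_com} shows the guarantee survives any number of (even adaptive) queries.

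\emph{From $\widehat f$ to $\widetilde f$, and unbiasedness.} Fix $x$. Writing $a_r:=g_r(x)\,\widetilde C[r,h_r(x)]$, the non-private estimate is $\widetilde f(x)=\operatorname{median}_r a_r$ and the private one is $\widehat f(x)=\operatorname{median}_r\big(a_r+g_r(x)Z_{r,h_r(x)}\big)$. Since the median of a vector is $1$-Lipschitz in the $\ell_\infty$ norm and $|g_r(x)|=1$, we get $|\widehat f(x)-\widetilde f(x)|\le\max_r|Z_{r,h_r(x)}|\le\max_{r,c}|Z_{r,c}|$, and the right-hand side does not depend on $x$. The array $Z$ has $M:=\tfrac1\gamma\log(2/\beta)$ entries, so a union bound together with the Gaussian tail $\mathbb P(|Z_{r,c}|>t)\le 2e^{-t^2/(2\sigma^2)}$ at $t=E=\sigma\sqrt{2\log(4M/\beta)}$ — which is exactly the $E$ set in Algorithm~\ref{alg:DP-initialization} — gives $\max_{r,c}|Z_{r,c}|\le E$ with probability at least $1-\beta/2$, hence $\forall x:\ |\widehat f(x)-\widetilde f(x)|\le E$ with probability $1-\beta$. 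For unbiasedness, observe that conditionally on the hashes each summand $a_r+g_r(x)Z_{r,h_r(x)}$ has a law symmetric about $f(x)$: the CountSketch collision term $\sum_{y\neq x}f(y)g_r(x)g_r(y)\mathbf 1[h_r(y)=h_r(x)]$ is symmetric about $0$ by the random signs, and the independent $g_r(x)Z_{r,h_r(x)}\sim\mathcal N(0,\sigma^2)$ is symmetric too; as the $d$ rows are i.i.d., the median (the average of the two central order statistics when $d$ is even) is symmetric about $f(x)$ by equivariance, so $\mathbb E[\widehat f(x)]=f(x)$.

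\emph{From $\widehat f$ to $f$.} By the triangle inequality $|\widehat f(x)-f(x)|\le|\widehat f(x)-\widetilde f(x)|+|\widetilde f(x)-f(x)|$; the first term is $\le E$ with probability $1-\beta/2$ by the previous paragraph, and the second is $\le\gamma N$ with probability $1-\beta/2$ by the standard CountSketch accuracy bound with $d=\log(2/\beta)$ rows and $1/\gamma$ columns (the non-private guarantee of Section~2, whose row count is tuned so the failure probability is $\beta/2$). A union bound then yields $|\widehat f(x)-f(x)|\le\gamma N+E$ with probability $1-\beta$. The only delicate point — and the one I expect to be the main friction, rather than any conceptual step — is the probability accounting: one must check that the maximal-Gaussian union bound over all $M$ cells and the CountSketch failure event each consume only $\beta/2$, which is precisely why $E$ carries $\tfrac4\gamma$ rather than $\tfrac2\gamma$ inside its logarithm (absorbing the factor $2$ from the two-sided tail), and that the constants in the non-private CountSketch bound indeed produce error $\gamma N$ at failure level $\beta/2$ with this row/column count.
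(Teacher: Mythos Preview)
Your proposal is correct and follows essentially the same route as the paper: Gaussian mechanism plus post-processing for privacy, a union bound over all $\tfrac{1}{\gamma}\log(2/\beta)$ Gaussian cells (the paper's Lemma~\ref{lem:utility}) together with the $\ell_\infty$-Lipschitz property of the median for the uniform deviation $|\widehat f-\widetilde f|\le E$, a symmetry argument for unbiasedness, and the triangle inequality with the non-private CountSketch bound for the point-wise accuracy. The only cosmetic difference is that you invoke ``median is $1$-Lipschitz in $\ell_\infty$'' directly, whereas the paper spells this out via a short contradiction on the order statistics; the content is identical.

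One small expository tension to clean up: you announce at the outset that the hash families $h_r,g_r$ are fixed throughout, but your unbiasedness paragraph then relies on the randomness of the signs (``the collision term \ldots\ is symmetric about $0$ by the random signs'' and ``the $d$ rows are i.i.d.''). Fixing the hashes is the right stance for privacy and for the uniform noise bound, but for unbiasedness you must take expectation over the signs (and noise) --- exactly as the paper does via its sign-flipping/noise-flipping symmetry. Simply drop the word ``throughout'' and make explicit that unbiasedness is argued over the randomness of $g$ and $Z$ with $h$ fixed.
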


\textbf{Comparison to CountSketch.} Comparing our Theorem \ref{thm:median} with the conclusion in \citep{charikar2002finding}, our Private CountSketch preserves the nice property that the output will be an unbiased estimate of the frequency. This property enables our use of Private CountSketch in quantile estimation below (Section \ref{sec:quantile}). Furthermore, within the most popular regime where the privacy budget $\rho$ is a constant, the additional error bound due to differential privacy is independent of the size of database $N$, thus it will become negligible as $N$ goes large.

\subsection{The Uniform Bound of Additional Error}
Theorem~\ref{thm:min} and Theorem~\ref{thm:median} show a uniform bound $\sup_{x}|\widehat{f}(x)-\widetilde{f}(x)| \leq O(E)$ for linear sketches, which upper bounds the additional error imposed on the estimated frequency due to Differential Privacy guarantees. To derive the point-wise bound for $|\widehat{f}(x)-f(x)|$, we combine our result with the point-wise bound for $|\widetilde{f}(x)-f(x)|$~\citep{cormode2005improved, charikar2002finding} (note it is straightforward to apply other analyses on the point-wise bound~\citep{minton2014improved,larsen2021countsketches} due to the triangle inequality). Moreover, in Appendix~E, we demonstrate that our analysis provides the tight uniform bound when items are drawn from a large universe.


\section{Private Quantile Sketches}\label{sec:quantile}
In this section, we apply our Private CountSketch to state of the art quantile sketches in the turnstile model. Our private Dyadic CountSketch can estimate all the quantiles accurately at the same time while ensuring differential privacy.

\subsection{Revisiting DCS}
In \citep{wang2013quantiles}, it is shown that DCS can return all $\gamma$-approximate quantiles with constant probability using space $O\left(\frac{1}{\gamma}\log^{1.5}U\log^{1.5}(\frac{\log U}{\gamma})\right)$. More specifically, the sketch structure here consists of $\log U$ CountSketches, each CountSketch uses a counter arrays $C$, which is $d\times w$ counters. The choice of $d,w$ follows $d=\Theta\left(\log(\frac{\log U}{\gamma})\right)$ and $w=O\left(\sqrt{\log U\log(\frac{\log U}{\gamma})}/\gamma\right)$.

\subsection{Private DCS}
In this work, we aim to estimate the quantiles accurately while preserving privacy. We do this by replacing CountSketch with PrivateCountSketch, which bases on the same structure as CountSketch discussed above. Given the privacy budget $\rho$, the privacy budget of each Private CountSketch is thus $\rho_0=\frac{\rho}{\log U}$, due to composition of zCDP (Lemma \ref{lem:adaptive_com}). The $\ell_2$-sensitivity of each Private CountSketch is $$\Delta_2=O(\sqrt{2d})=O\left(\sqrt{\log\left(\frac{\log U}{\gamma}\right)}\right).$$
To keep the whole algorithm $\rho$-zCDP, it suffices to keep each CountSketch $\rho_0$-zCDP (Lemma \ref{lem:adaptive_com}). Therefore, Gaussian Mechanism (Definition \ref{def:gaussian}) with $\sigma^2=O\left(\log U\log\left(\frac{\log U}{\gamma}\right)/\rho\right)$ ensures $\rho$-zCDP (Lemma \ref{lem:gau_mechanism}). Similar to Lemma \ref{lem:utility}, define $E(\gamma,U)=O\left(\sqrt{\frac{\log U\log\left(\frac{\log U}{\gamma}\right)}{\rho}}\cdot\sqrt{\log\frac{\log U\log\left(\frac{\log U}{\gamma}\right)}{\gamma}}\right)$, we can prove that with constant probability, all the Gaussian noises we add to all $\log U$ CountSketches are bounded by $E$ (for simplicity, we use $E$ to represent $E(\gamma,U)$).

Conditioned on the high probability event above, we prove that for a fixed quantile, the estimated quantile will be accurate with high probability. As has been proven in Theorem \ref{thm:median}, the output estimated frequency is unbiased for any item. Therefore, similar to \citep{wang2013quantiles}, for any item $x$ (corresponding to a fixed CountSketch), we have the output $\widehat{f}(x)$ of that CountSketch satisfies
$$\mathbb{P}\left[\left|\widehat{f}(x)-f(x)\right|>\frac{1}{w}\cdot N+E\right]<\exp\left(-O(d)\right)=O\left(\frac{\gamma}{\log U}\right).$$
By a union bound, with probability $1-\log U\times O\left(\frac{\gamma}{\log U}\right)=1-O(\gamma)$, for any item corresponding to this fixed quantile, the error of CountSketch is bounded by $\frac{1}{w}\cdot N+E$. Conditioned upon this event, by Hoeffding's inequality, with probability $1-O\left(\frac{\gamma}{\log U}\right)$, the sum of $\log U$ such independent errors is bounded by 
\begin{equation}\label{eqn:quantile}
\sqrt{\log U\log\left(\frac{\log U}{\gamma}\right)}\cdot\left(\frac{N}{w}+E\right)=\gamma\cdot N+E^\prime,
\end{equation}
where $E^\prime=O\left(\frac{\log U\log\left(\frac{\log U}{\gamma}\right)}{\sqrt{\rho}}\cdot\sqrt{\log\frac{\log U\log\left(\frac{\log U}{\gamma}\right)}{\gamma}}\right).$ To sum up, for a fixed quantile, with probability $1-O(\gamma)$, the estimating error is bounded by $\gamma\cdot N+E^\prime$.

Finally, apply another union bound on the $\frac{1}{\gamma}$ different quantiles, with constant probability, all the quantiles are estimated accurately (within the error bound \eqref{eqn:quantile}). Note that similar to \citep{wang2013quantiles}, the failure probability here is a constant. For any failure probability $\beta$, we can further increase $d$ by a factor of $\log \frac{1}{\beta}$ to reduce this failure probability to $\beta$. 

\textbf{Take-away of Private DCS.} First, our Private DCS has a same space complexity as the original DCS. In addition, according to \eqref{eqn:quantile}, the additional error bound is proportional to $\frac{\log U\log\frac{1}{\gamma}}{\sqrt{\rho}}$ (ignoring $\log\log$ terms), and independent to the size of database $N$. In the most popular regime where the privacy budget $\rho$ is a constant, the additional error bound only appears as lower order terms, which will become negligible as $N$ goes large.

\section{Evaluation}\label{sec:evaluation}

We have implemented DP linear sketches and DP DCS, and conducted extensive experiments to evaluated the privacy-utility trade-off of our proposed private sketches. The implementations are written in Python with the advantage of fast prototyping and good readability. The code for the following experiments can be found on Github~\footnote{https://github.com/ZhaoFuheng/Differentially-Private-Linear-Sketches}.

\subsection{Data Sets}

The experimental evaluation is conducted using both synthetic and real world data sets. We consider the synthetic Zipf dataset~\cite{zipf2016human} with universe size of $2^{16}$ and the source IP addresses from CAIDA Anonymized Internet Trace 2015 dataset~\cite{passive_2015_pcap} with universe size of $2^{32}$. For each independent run in the experiments, we use an input database size $N = 10^{5}$.
    


\subsection{Metrics}

In all experiments, we average the various metrics over 5 independent runs to minimize the measurement variance. The metrics used in the experiments are: 

\textbf{Average Relative Error}:  Let the set $\Psi$ denotes all unique items in the database. Average Relative Error (ARE) is computed based on $\Psi$ in which $\frac{1}{\Psi} \sum_{e \in \Psi} \frac{ |f(e) - \widehat{f}(e)|}{f(e)}$.

\textbf{F1 Score}: F1 score is the harmonic mean of the precision and recall ($2 \cdot \frac{precision \cdot recall}{precision + recall}$). 

\textbf{Average Rank Error}: For each evenly spaced quantile and its associated item, we average the distance between the true rank and estimated rank.

 We use ARE to evaluate sketch performance on frequency estimation and F1 score to evaluate the sketch's performance in identifying the top 10 items. For quantile approximation, we consider the $m$ evenly spaced quantiles and items. For instance, if $m=1$, we consider the rank error for the median item; if $m=2$, we consider then average rank error for the $33^{rd}$ and $67^{th}$ percentile items. Lower ARE and average rank error, and higher F1 score indicate better approximation.

\subsection{Private Linear Sketches Experiments}

To evaluate the utility of DP linear sketches, we compare the average relative error (ARE) and F1 score for frequency estimation and identify the top 10 items, respectively. As shown in Figure~\ref{fig:FrequencyMain}, the x-axis represents the space budget for each sketch (from 9.2 KB to 147.3 KB), and the y-axis denotes ARE or F1 score. The DP linear sketches use $\rho \in \{0.1, 1, 10\}$ in which lower $\rho$ value indicates more noise need to be added, and all sketches assume $\beta = 1\%$.

For frequency estimation, the performance of our private CountSketch with various privacy budgets is basically equivalent to the performance of the non-private CountSketch. Under different space and privacy budgets, they have minimal difference in ARE for both Zipf and CAIDA datasets, meaning that, while providing strong privacy guarantee, the estimated frequencies are still very accurate. The accurate estimation of private CountSketch is primarily due to the unbiased nature of CountSketch in which, by adding Gaussian noise, the private CountSketch still provides unbiased estimation for an item's frequency as proved in Theorem~\ref{thm:median}. As shown in both Figure~\ref{fig:FEZipf} and Figure~\ref{fig:FECaida}, the performance of private Count-Min degrades when the space budget increases or the privacy budget decreases. This behavior is expected as the upper bound on the frequency error in Theorem~\ref{thm:min} has a dependency on both $\gamma$ and $\rho$. In order to preserve the property of not underestimating an item's frequency, the private Count-Min sketch needs to add larger noise to each counter when the number of counters increases. As a result, the estimated frequencies for low-frequency items become inflated and this in turn decreases the overall accuracy.
\begin{figure}[H]
	\centering     
	\subfigure[Frequency Estimation with Zipf]{\label{fig:FEZipf}\includegraphics[width=0.4\linewidth]{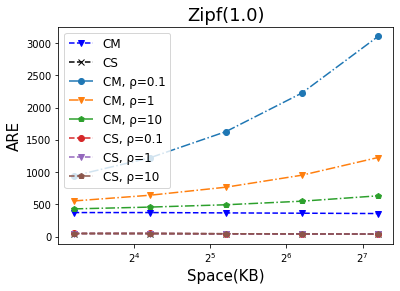}}
	\subfigure[Frequency Estimation with CAIDA]{\label{fig:FECaida}\includegraphics[width=0.4\linewidth]{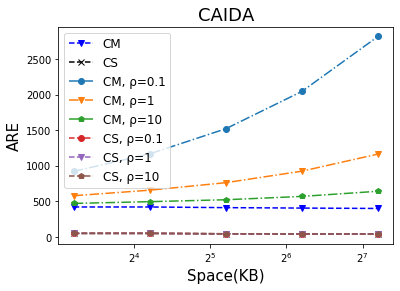}}\\
	\subfigure[Top 10 with Zipf]{\label{fig:TopZipf}\includegraphics[width=0.4\linewidth]{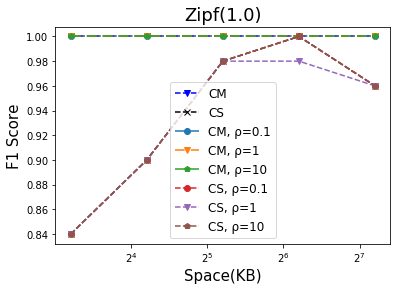}}
	\subfigure[Top 10 with CAIDA dataset]{\label{fig:TopCaida}\includegraphics[width=0.4\linewidth]{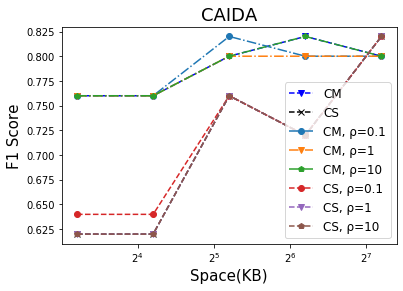}}
	\caption{Comparison of non-private linear sketches and DP linear sketches with various privacy budget under synthetic and real world datasets. The experiments assume $\beta = 1\%$ and $N=10^5$.}
	\label{fig:FrequencyMain}
\end{figure}

For approximate top 10 items, private CountSketch has similar performance to CountSketch. Since both non-private and private CountSketch are unbiased, they may underestimate the frequency of true top K items and decrease the recall.
On the other hand, the property of no underestimation is desirable for approximate top K items. In particular, non-private and private Count-Min sketch score high F1 scores for all datasets. While providing privacy guarantees, private Count-Min achieve 1.0 F1 scores for all space and all privacy budgets in Zipf dataset, as shown in Figure~\ref{fig:TopZipf}.

\subsection{Private Quantile Sketch Experiments}

\begin{figure}[H]
	\centering     
	\subfigure[Quantile Approximation with Zipf]{\label{fig:QuantileZipf}\includegraphics[width=0.4\linewidth]{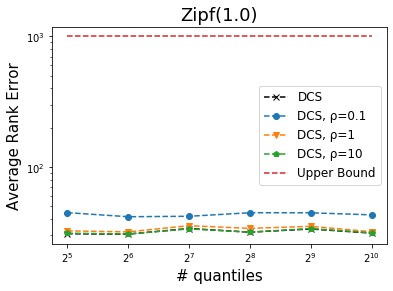}}
	\subfigure[Quantile Approximation with CAIDA]{\label{fig:QuantileCaida}\includegraphics[width=0.4\linewidth]{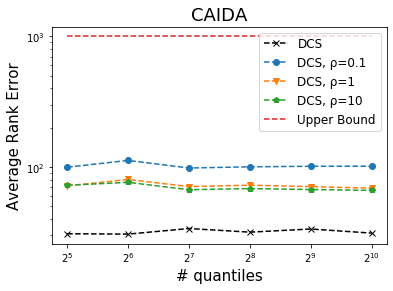}}\\
	\caption{Compare DCS and DP DCS with various privacy budget under synthetic and real world datasets. The experiments assume $\gamma = 1\%$, $N=10^5$, and the desired error upper bound is $10^3$ ($\gamma N$).}
	\label{fig:QuantileMain}
\end{figure}

To evaluate the utility of DP DCS, we compare the average rank error. As shown in Figure~\ref{fig:QuantileMain}, the x-axis represents the number of evenly spaced quantiles, and the y-axis denotes the average rank error. The DP DCS use privacy budget $\rho \in \{0.1, 1, 10\}$  and all sketches assume $\gamma = 1\%$. 

For the quantile approximation, we observe that the increase in the number of evenly spaced quantiles does not impact the average rank error, as shown in both Figure~\ref{fig:QuantileZipf} and Figure~\ref{fig:QuantileCaida}. Since the CAIDA dataset universe size ($2^{32}$) is larger than Zipf dataset universe size ($2^{16}$), the average rank error in the CAIDA dataset is larger than the average rank error in the Zipf dataset. As shown in Equation~\eqref{eqn:quantile}, the error bound has a term depending on the universe size in which a large universe size leads to more error. When the privacy budget decreases, the average rank error increases as more noise needs to be added. Comparing DP DCS with strong privacy ($\rho=0.1$) to DCS, the increase in rank error is relatively small compared to the database size of $10^{5}$. In addition, the desired rank error upper bound is $\gamma \cdot N = 10^{3}$ and all the rank errors are one order of magnitude lower.

\section{Related Works}
\vspace{-0.1in}
\textcolor{black}{In this section, we discuss and compare our results to previous literature on Private Count-Min Sketch~\citep{mir2011pan, melis2015efficient, ghazi2019power}, and the concurrent work on Private CountSketch~\citep{pagh2022improved}. In fact, ~\citet{pagh2022improved} and us both independently discovered the same algorithm for Private CountSketch with differences in the theoretical analysis. To the best of our knowledge, we are the first to present a DP quantile sketch in the turnstile model.}

\smartparagraph{Private Count-Min.}\textcolor{black}{~\citet{mir2011pan} proposed to add Laplace noise into the Count-Min Sketch estimator to derive the number of heavy hitters with Pan-Privacy~\citep{dwork2010pan}. Similarly,~\citet{melis2015efficient} add independent Laplace noise to each counter of the sketch instead of the estimator. However, adding Laplace noise breaks the nice property of never underestimation in Count-Min. In contrast, our private Count-Min guarantees no underestimation with high probability.~\citep{ghazi2019power} added one-sided binomial noise into each counter of the sketch to preserve the property of no underestimation. However, using the Binomial mechanism inherently implies approximate differential privacy~\citep{canonne2020discrete}. In contrast, by using the Gaussian mechanism, our Private Count-Min provides the stronger concentrated differential privacy guarantee.}

\smartparagraph{Private CountSketch.}\textcolor{black}{~\citet{pagh2022improved} and us both independently discovered the same algorithm for private CountSketch. There is a major difference in the analysis and we believe both analyses are valuable, in which~\citet{pagh2022improved} focused on deriving a tight point-wise bound for $|\widehat{f}(x)-f(x)|$, while we focused on deriving a uniform bound for $\sup_{x}|\widehat{f}(x)-\widetilde{f}(x)|$. Our uniform bound for $\sup_{x}|\widehat{f}(x)-\widetilde{f}(x)|$ can derive the point-wise bound for $|\widehat{f}(x)-f(x)|$, by combining our result with any point-wise bound for $|\widetilde{f}(x)-f(x)|$ due to the triangle inequality.~\citep{pagh2022improved} obtains a tighter point-wise bound by using concentration of median instead of triangle inequality. However, \citet{pagh2022improved}'s analysis can not imply the point-wise bound for $|\widehat{f}(x)-\widetilde{f}(x)|$. More detailed comparisons are included in Appendix~C.}

\section{Conclusion}

In this work, we demonstrate that linear sketches can be made differentially private and provide useful information while maintaining their original properties by adding a small amount of Gaussian noise at initialization. In addition, leveraging the private CountSketch, we propose the DP DCS for quantile approximation in the turnstile model. DP DCS achieves low rank errors even for a large data universe. Moreover, for all the proposed algorithms, when the privacy budget is constant, the additional error due to privacy is independent of the database size and the error will become negligible when the database grows larger. \textcolor{black}{Moreover, private linear sketches bring new opportunities for other statistical questions such as the private euclidean distance estimation~\citep{stausholm2021improved} which can be calculate as the dot product of two private linear sketches}. As a result, we believe our proposed algorithms are efficient and practical for real-world systems and enable these systems to perform data analysis and machine learning tasks privately.

\acksection
This work is partially supported by gifts from Snowflake Inc, and NSF grants CNS-1703560, CNS-1815733 and CNS-2048091. The authors thank Rasmus Pagh for a helpful discussion regarding their concurrent work \citep{pagh2022improved}. The authors also thank Adam Smith for clarifying the mergeability in the inherently private  Flajolet-Martin Sketch \citep{smith2020flajolet}.

\bibliographystyle{plainnat}
\bibliography{reference}

\appendix
\clearpage
\section{Missing proofs}\label{sec:proof}
In this section, we present the missing proofs. Recall that  for each item $x$, we perform query as in Algorithm~\ref{alg:query}. In the proof below, we use $\mathrm{arr}^\prime$ to denote the $\mathrm{arr}$ in Algorithm \ref{alg:query} (with private initialization) while $\mathrm{arr}$ denotes the $\mathrm{arr}$ in Algorithm~\ref{alg:query} under non-private initialization (all zero initialization) and the same set of hash functions. In addition, $\widehat{f}(x)$ is the output estimated frequency, $f(x)$ is the actual frequency and $\widetilde{f}(x)$ is the non-private estimated frequency (the output of the original Count-Min and CountSketch with the same set of hash functions). We first present the following Lemma \ref{lem:utility}, which gives a high probability bound for the Gaussian noises we add.

\begin{lemma}[Utility analysis]\label{lem:utility}
If there are $\frac{1}{\gamma}\log(\frac{2}{\beta})$ independent Gaussian noises sampled from $\mathcal{N}(0,\sigma^{2})$ (where $\sigma=\sqrt{\frac{\log(2/\beta)}{\rho}}$), denoted as $\sigma_{ij}$ where $i\in[\log(2/\beta)]$, $j\in[1/\gamma]$, then with probability $1-\frac{\beta}{2}$, for any $i\in[\log(2/\beta)]$, $j\in[1/\gamma]$,
\begin{equation}
|\sigma_{ij}|\leq\sqrt{2}\sigma\cdot \sqrt{\log\frac{\frac{4}{\gamma}\log(\frac{2}{\beta})}{\beta}}=\sqrt{\frac{2\log\frac{2}{\beta}}{\rho}}\cdot\sqrt{\log\frac{\frac{4}{\gamma}\log(\frac{2}{\beta})}{\beta}}.
\end{equation}
\end{lemma}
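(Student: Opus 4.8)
The plan is to prove Lemma~\ref{lem:utility} by a straightforward Gaussian tail bound combined with a union bound over all $\frac{1}{\gamma}\log(\frac{2}{\beta})$ noise variables. First I would recall the standard concentration inequality for a single mean-zero Gaussian: if $Z\sim\mathcal{N}(0,\sigma^2)$, then for any $t>0$, $\mathbb{P}(|Z|>t)\leq 2\exp\left(-\frac{t^2}{2\sigma^2}\right)$. This is the only analytic fact needed; everything else is bookkeeping.

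Next I would choose the threshold $t$ so that the per-variable failure probability, after multiplying by the number of variables, equals $\frac{\beta}{2}$. Concretely, set $t=\sqrt{2}\sigma\cdot\sqrt{\log\frac{\frac{4}{\gamma}\log(\frac{2}{\beta})}{\beta}}$, so that $\frac{t^2}{2\sigma^2}=\log\frac{\frac{4}{\gamma}\log(\frac{2}{\beta})}{\beta}$, which gives $2\exp\left(-\frac{t^2}{2\sigma^2}\right)=2\cdot\frac{\beta}{\frac{4}{\gamma}\log(\frac{2}{\beta})}=\frac{\beta}{\frac{2}{\gamma}\log(\frac{2}{\beta})}=\frac{\beta/2}{\frac{1}{\gamma}\log(\frac{2}{\beta})}$. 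Then a union bound over the $\frac{1}{\gamma}\log(\frac{2}{\beta})$ independent noises $\sigma_{ij}$ yields total failure probability at most $\frac{\beta}{2}$, so with probability $1-\frac{\beta}{2}$ all $|\sigma_{ij}|\leq t$ simultaneously. Finally I would substitute $\sigma=\sqrt{\frac{\log(2/\beta)}{\rho}}$ into $\sqrt{2}\sigma$ to obtain the closed-form expression $\sqrt{\frac{2\log\frac{2}{\beta}}{\rho}}\cdot\sqrt{\log\frac{\frac{4}{\gamma}\log(\frac{2}{\beta})}{\beta}}$ claimed in the statement, confirming this matches the quantity $E$ defined in Algorithm~\ref{alg:DP-initialization}.

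There is essentially no hard step here: the proof is a textbook Gaussian tail bound plus a union bound, and the only mild care needed is verifying the algebra in the choice of $t$ so that the constants line up exactly (the factor of $2$ in the two-sided bound, the $\frac{4}{\gamma}$ versus $\frac{2}{\gamma}$, and the leftover $\frac{\beta}{2}$). If anything, the "obstacle" is purely cosmetic — making sure the displayed constant inside the logarithm is exactly $\frac{4}{\gamma}\log(\frac{2}{\beta})$ rather than a slightly different constant, which is forced by requiring the union-bounded failure probability to be at most $\frac{\beta}{2}$ (leaving the remaining $\frac{\beta}{2}$ of the budget for the sketch's own randomness in Theorems~\ref{thm:min} and~\ref{thm:median}).
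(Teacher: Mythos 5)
Your proof is correct and follows exactly the route the paper takes: the paper's own proof is a one-line appeal to "the concentration inequality of Gaussian distribution and a union bound," and your writeup simply fills in the tail bound $\mathbb{P}(|Z|>t)\leq 2\exp(-t^2/(2\sigma^2))$ and verifies the algebra for the threshold $t$, which checks out. Nothing further is needed.
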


\begin{proof}[Proof of Lemma \ref{lem:utility}]
The lemma directly results from the concentration inequality of Gaussian distribution and a union bound.
\end{proof}

\begin{theorem}[Restate Theorem \ref{thm:min}]\label{thm:restate_min}
Private Count-Min satisfies $\rho$-zCDP regardless of the number of queries. Furthermore, with probability $1-\beta$, the output $\widehat{f}(x)$ satisfies that 
$$\forall\, x, 0\leq\widehat{f}(x)-\widetilde{f}(x)\leq 2E.$$
In addition, for each item $x$, with probability $1-\beta$, $$0\leq \widehat{f}(x)-f(x)\leq \gamma\cdot N+2E.$$
\end{theorem}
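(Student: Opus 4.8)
The plan is to establish the three assertions in sequence: the $\rho$-zCDP guarantee, the uniform bound $0\le\widehat f(x)-\widetilde f(x)\le 2E$, and then the point-wise accuracy bound obtained by chaining the uniform bound with the classical Count-Min guarantee. For privacy, I would fix the hash functions and observe that the all-zero-initialized counter array $\widetilde C$ is a deterministic function of the database $X$; under the replace notion of neighbors, changing one item of $X$ into another changes, in each of the $\log(2/\beta)$ rows, at most two counters and each by exactly $1$ (recall that in Count-Min every $g_r\equiv 1$), so $\Delta_2(\widetilde C)\le\sqrt{2\log(2/\beta)}$. Algorithm~\ref{alg:DP-initialization} then releases $\widetilde C+\mathcal N(0,\sigma^2 I)$ together with a deterministic shift of every coordinate by $E$, where $\sigma^2=\log(2/\beta)/\rho=\Delta_2^2/(2\rho)$; since the shift by $E$ is pure post-processing, Lemma~\ref{lem:gau_mechanism} gives that releasing $C$ is $\rho$-zCDP. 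Every query answer produced by Algorithm~\ref{alg:query} is a deterministic function of $C$ and the queried item alone, so by the post-processing clause of Lemma~\ref{lem:adaptive_com} the guarantee is unaffected no matter how many queries are issued, which is exactly the ``regardless of the number of queries'' claim.

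For the uniform bound I would condition on the event of Lemma~\ref{lem:utility} that every one of the $\frac1\gamma\log(2/\beta)$ Gaussian noises $\sigma_{r,c}$ satisfies $|\sigma_{r,c}|\le E$; this event has probability at least $1-\beta/2$ and, crucially, makes no reference to $x$. On this event $E+\sigma_{r,c}\in[0,2E]$, hence $\widetilde C[r,c]\le C[r,c]\le\widetilde C[r,c]+2E$ for every cell $(r,c)$. Evaluating at the cells $(r,h_r(x))$ and using monotonicity of $\min$ over the $d$ rows gives $\widetilde f(x)\le\widehat f(x)\le\widetilde f(x)+2E$; because the conditioning event does not depend on $x$, this holds for all $x$ simultaneously, which is precisely the uniform bound (and with room to spare, since $1-\beta/2\ge 1-\beta$).

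For the point-wise accuracy bound I would invoke the classical Count-Min analysis of \citep{cormode2005improved}: for a fixed item $x$, with probability at least $1-\beta/2$ one has $0\le\widetilde f(x)-f(x)\le\gamma N$, Count-Min never underestimating. Taking a union bound over this event and the noise event of Lemma~\ref{lem:utility} (total failure probability at most $\beta$) and writing $\widehat f(x)-f(x)=(\widehat f(x)-\widetilde f(x))+(\widetilde f(x)-f(x))$, both summands are nonnegative and bounded by $2E$ and $\gamma N$ respectively, which yields $0\le\widehat f(x)-f(x)\le\gamma N+2E$. I do not anticipate a serious obstacle: the only points that need care are getting the $\ell_2$-sensitivity right under the replace adjacency (so that the $\sigma$ set in Algorithm~\ref{alg:DP-initialization} exactly matches $\Delta_2^2/(2\rho)$, and $E$ from Lemma~\ref{lem:utility} is exactly the value used there), and noticing that the noise-bounding event of Lemma~\ref{lem:utility} carries no dependence on $x$, so that the per-$x$ min-monotonicity argument upgrades for free to a bound uniform over the entire universe.
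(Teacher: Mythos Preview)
Your proposal is correct and follows essentially the same approach as the paper's proof: privacy via the Gaussian mechanism (Lemma~\ref{lem:gau_mechanism}) plus post-processing (Lemma~\ref{lem:adaptive_com}), the uniform bound by conditioning on the event of Lemma~\ref{lem:utility} so that every shifted noise lies in $[0,2E]$ and applying monotonicity of $\min$, and the point-wise bound by a union bound with the classical Count-Min guarantee. Your write-up is slightly more explicit than the paper's (you spell out the sensitivity calculation and the role of the data-independent shift by $E$ as post-processing), but the logical structure is identical.
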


\begin{proof}[Proof of Theorem~\ref{thm:restate_min}]
Differential privacy directly results from the DP guarantee of Gaussian Mechanism (Lemma \ref{lem:gau_mechanism}) and post processing (Lemma~\ref{lem:adaptive_com}).\\

According to Lemma \ref{lem:utility}, we have with probability $1-\frac{\beta}{2}$, for all noises $E+\sigma_{ij}$, where $\sigma_{ij}\sim \mathcal{N}(0,\sigma^{2})$, $i\in[\log(2/\beta)]$, $j\in[1/\gamma]$, it holds that $$0\leq E+\sigma_{ij}\leq 2E.$$
Under the above case that will happen with probability $1-\frac{\beta}{2}$, for all $x$ and the corresponding $\mathrm{arr}$, $\mathrm{arr}^\prime$, it holds that
$$\min(\mathrm{arr})\leq \min(\mathrm{arr}^{\prime})\leq \min(\mathrm{arr}+2E\cdot\mathds{1})\leq \min(\mathrm{arr})+2E.$$
Therefore, for all $x$,
$$\widetilde{f}(x)\leq\widehat{f}(x)\leq \widetilde{f}(x)+2E.$$

For the last conclusion (point-wise bound), by the property of Count-Min~\citep{cormode2005improved}, we have for any item $x$, with probability $1-\frac{\beta}{2}$,
$$0\leq \min(\mathrm{arr})-f(x)\leq \gamma\cdot N.$$
Therefore, for any item $x$, with probability $1-\beta$, we have
$$\widehat{f}(x)=\min(\mathrm{arr}^{\prime})\geq \min(\mathrm{arr})\geq f(x)$$ and $$\widehat{f}(x)=\min(\mathrm{arr}^{\prime})\leq \min(\mathrm{arr})+2E\leq f(x)+\gamma\cdot N+2E.$$
Then the proof is completed by plugging in the definition of $E$.
\end{proof}

\begin{theorem}[Restate Theorem \ref{thm:median}]\label{thm:restate_median}
Private CountSketch satisfies $\rho$-zCDP regardless of the number of queries. Furthermore, the frequency query from Private CountSketch is unbiased and with probability $1-\beta$, 
$$\forall\, x, |\widehat{f}(x)-\widetilde{f}(x)|\leq E.$$
In addition, for each item $x$, with probability $1-\beta$,
$$|\widehat{f}(x)-f(x)|\leq \gamma\cdot N+E.$$
\end{theorem}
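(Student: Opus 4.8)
The plan is to follow the three-part structure of the proof of Theorem~\ref{thm:restate_min}, with the median replacing the minimum and $E$ replacing $2E$ in the utility arguments. For privacy, I would argue exactly as for Private Count-Min: the claim follows from the DP guarantee of the Gaussian mechanism (Lemma~\ref{lem:gau_mechanism}), applied with $\ell_2$-sensitivity $\Delta_2=\sqrt{2\log(2/\beta)}$ and noise level $\sigma=\sqrt{\log(2/\beta)/\rho}$, together with post-processing immunity (Lemma~\ref{lem:adaptive_com}), which in particular makes the arbitrarily many queries free.

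To prove unbiasedness, I would decompose the $r$-th entry of $\mathrm{arr}^{\prime}$ as
\[
g_r(x)\,C^{\prime}[r,h_r(x)] \;=\; f(x)\;+\;\sum_{y\neq x:\,h_r(y)=h_r(x)} g_r(x)g_r(y)\,f(y)\;+\;g_r(x)\,\sigma_{r,h_r(x)},
\]
where $\sigma_{r,h_r(x)}\sim\mathcal{N}(0,\sigma^2)$ is the noise in the cell hit by $x$. Flipping the sign of $g_r(x)$ negates both the collision sum and the noise term while leaving their joint law unchanged (the $g$-values are independent uniform $\pm1$, and the Gaussian noise is symmetric and independent of everything else), so each per-row estimate is symmetric about $f(x)$; the rows are moreover i.i.d.\ because $(h_r,g_r)$ and $\sigma_{r,\cdot}$ are i.i.d.\ across $r$. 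Hence the median of the $d$ per-row estimates is again symmetric about $f(x)$, and it is integrable (the collision term is bounded by $N$ in absolute value and the Gaussian has a finite first moment), so $\mathbb{E}[\widehat f(x)]=f(x)$.

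For the two error bounds I would condition on the event of Lemma~\ref{lem:utility}, which occurs with probability at least $1-\beta/2\ge 1-\beta$ and guarantees $|\sigma_{ij}|\le E$ for every counter. Since $\mathrm{arr}^{\prime}_r=\mathrm{arr}_r+g_r(x)\sigma_{r,h_r(x)}$ with $|g_r(x)|=1$, we get $\|\mathrm{arr}^{\prime}-\mathrm{arr}\|_\infty\le E$ simultaneously for all $x$, and because the median is $1$-Lipschitz with respect to $\|\cdot\|_\infty$ this gives the uniform bound $|\widehat f(x)-\widetilde f(x)|=|\mathrm{median}(\mathrm{arr}^{\prime})-\mathrm{median}(\mathrm{arr})|\le E$ for all $x$. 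For the point-wise bound I would invoke the original CountSketch accuracy guarantee~\citep{charikar2002finding}: for a fixed $x$, with probability at least $1-\beta/2$, $|\widetilde f(x)-f(x)|\le\gamma\cdot N$; a union bound with the Lemma~\ref{lem:utility} event plus the triangle inequality then yields $|\widehat f(x)-f(x)|\le\gamma\cdot N+E$ with probability $1-\beta$. I expect the main obstacle to be making the unbiasedness argument rigorous — in particular, checking that the \emph{median}, and not merely the mean, of the per-row estimators inherits the symmetry about $f(x)$, and verifying integrability — since the privacy and concentration parts are a direct rerun of the Count-Min proof.
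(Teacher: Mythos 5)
Your proposal is correct and follows essentially the same route as the paper's proof: privacy via the Gaussian mechanism (Lemma~\ref{lem:gau_mechanism}) plus post-processing, unbiasedness via a sign-flip symmetry that reflects each per-row estimate about $f(x)$, the uniform bound by conditioning on the event of Lemma~\ref{lem:utility} and observing that the median moves by at most the $\|\cdot\|_\infty$ perturbation $E$ (the paper proves this $1$-Lipschitz fact by a direct counting argument on the $2k+1$ entries), and the point-wise bound by the triangle inequality. The only cosmetic differences are that the paper realizes the reflection by negating $g_r(y)$ for all $y\neq x$ together with the noises, whereas you negate $g_r(x)$ itself, and that you additionally verify integrability of the median, which the paper leaves implicit.
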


\begin{proof}[Proof of Theorem \ref{thm:restate_median}]
First of all, differential privacy directly results from the DP guarantee of Gaussian Mechanism (Lemma \ref{lem:gau_mechanism}) and post processing (Lemma~\ref{lem:adaptive_com}).

Next we claim that the conclusion that $\mathbb{E}\widehat{f}(x)=f(x)$ arises from symmetry. If we replace $\{h_i\}_{i\in[\log(2/\beta)]}$, $\{g_i\}_{i\in[\log(2/\beta)]}$, $\{\sigma_{ij}\}_{i\in[\log(2/\beta)],j\in[1/\gamma]}$ with $\{h_i\}_{i\in[\log(2/\beta)]}$, $\{g_i^\prime\}_{i\in[\log(2/\beta)]}$, $\{-\sigma_{ij}\}_{i\in[\log(2/\beta)],j\in[1/\gamma]}$ where
$g_i^\prime(x)=g_i(x)$ and $g_i^{\prime}(x^\prime)=-g_i(x^\prime),\,\forall\,x^\prime\neq x$, then the outputs under these two cases will be symmetric around $f(x)$ and the probability distribution function at these two cases are identical. Therefore, we have 
\begin{equation}
   \mathbb{E}\widehat{f}(x)=f(x).
\end{equation}

According to Lemma \ref{lem:utility}, we have with probability $1-\frac{\beta}{2}$, for all noises $\sigma_{ij}\sim \mathcal{N}(0,\sigma^{2})$, $i\in[\log(2/\beta)]$, $j\in[1/\gamma]$, it holds that $$|\sigma_{ij}|\leq E.$$
Under the above case that will happen with probability $1-\frac{\beta}{2}$, we will prove that for all $x$, $$|\widehat{f}(x)-\widetilde{f}(x)|\leq E.$$ 
Without loss of generality, we can assume that $\log\frac{2}{\beta}=2k+1$, where $k$ is a positive integer (we can choose $k$ to be the minimum integer such that $2k+1\geq \log\frac{2}{\beta}$).\\
Suppose that $\mathrm{median}(\mathrm{arr}^{\prime})>\mathrm{mediam}(\mathrm{arr})+E$, then it holds that there are at least $k+1$ elements in $\mathrm{arr}^{\prime}$ that is larger than $\mathrm{mediam}(\mathrm{arr})+E$ due to the definition of median. Because $|\sigma_{ij}|\leq E,$ for all $i,j$, there are at least $k+1$ elements in $\mathrm{arr}$ that is larger than $\mathrm{mediam}(\mathrm{arr})$, which leads to contradiction.
As a result, we have $$\widehat{f}(x)-\widetilde{f}(x)=\mathrm{median}(\mathrm{arr}^{\prime})-\mathrm{mediam}(\mathrm{arr})\leq E.$$
Similarly, $\widehat{f}(x)-\widetilde{f}(x)\geq -E.$
Combining these two results, for all $x$, it holds that
$$|\widehat{f}(x)-\widetilde{f}(x)|\leq E.$$

Finally, we prove the point-wise bound. By the property of CountSketch~\citep{charikar2002finding}, we have for any item $x$, with probability $1-\frac{\beta}{2}$,
$$|\widetilde{f}(x)-f(x)|\leq \gamma\cdot N.$$
Therefore, according to triangle inequality, for each item $x$, with probability $1-\beta$,
$$|\widehat{f}(x)-f(x)|\leq \gamma\cdot N+E.$$
Then the proof is completed by plugging in the definition of $E$.
\end{proof}

\section{Missing Quantile Algorithms}\label{sec:missingQuantile}
\citet{gilbert2002summarize} made the connection between frequency and quantiles to propose the first universe based $RSS$ quantile sketch in the turnstile model.
Observe, that the relationship between frequency and rank is that one can sum up all items' frequency in the range of 0 to the item itself to estimate the rank. However, this naive approach requires summing all items' frequencies in the range, and the error quickly escalates. A better approach is to break the range from 0 to item $x$ into at most $\log U$ dyadic intervals~\citep{cormode2019answering} and then sum all frequencies for each dyadic interval to obtain the estimation of the rank($x$).~\citet{cormode2005improved} proposed the Dyadic Count-Min sketch (DCM) which uses Count-Min sketches for estimating the frequencies of each dyadic interval with space complexity $O(\frac{1}{\gamma} \log^{2}{U} \log{\frac{\log{U}}{\gamma}})$ and update time $O( \log{U}\log{\frac{\log{U}}{\gamma}})$. Later, \citet{wang2013quantiles} leveraged the unbiased property of CountSketch and proposed the Dyadic CountSketch (DCS) which replaces the Count-Min sketch with the CountSketch~\citep{charikar2002finding} to further improve the space complexity to $O(\frac{1}{\gamma} \log^{1.5}{U \log^{1.5}(\frac{\log{U}}{\gamma})})$ while maintaining the same update time. DCS and DCM share the same update and query algorithms as shown below. DCM uses $O(\frac{1}{\gamma} \log{U} \log{\frac{\log{U}}{\gamma}})$ space for each Count-Min sketch and DCS uses $O(\frac{1}{\gamma} \log^{0.5}{U \log^{1.5}(\frac{\log{U}}{\gamma})})$ space for each CountSketch, where both of them use $O(\log\frac{\log U}{\gamma})$ rows.  For more specific details, \citep{cormode2020small} provide a comprehensive analysis of linear and quantile sketches.

\begin{algorithm}[H]
	\caption{DCS/DCM Update$(x, v)$}
	\label{alg:QuantileUpdate}
	\small{
		\begin{algorithmic}[1]
			\STATE {\bfseries Input:} Item $x$ with value $v \in \{-1, +1\}$, and an array of linear sketches $\{\mathrm{LS}_0, \ldots, \mathrm{LS}_{\log U}\}$.
			
			 \FOR{$j \gets 0, \ldots,\log U$}
			        \STATE $\mathrm{LS}_{j}.\mathrm{update}(x, v)$
			         \STATE $x \gets \lfloor x/2 \rfloor$
			\ENDFOR

			\STATE {\bfseries Output: $\{\mathrm{LS}_0, \ldots, \mathrm{LS}_{\log U}\}$. } 
			
		\end{algorithmic}
	}
\end{algorithm}

\begin{algorithm}[H]
	\caption{DCS/DCM Query$(x)$}
	\label{alg:QuantileQuery}
	\small{
		\begin{algorithmic}[1]
			\STATE {\bfseries Input:} Item $x$, and an array of linear sketches $\{\mathrm{LS}_0, \ldots, \mathrm{LS}_{\log U}\}$.
			\STATE $R \gets 0$
			\FOR{$i \gets 0, \ldots,\log U$}
			    \IF{$x$ is odd}
			        \STATE $R \gets R + \mathrm{LS}_{j}.\mathrm{query}(x)$
			     \ENDIF
			     \STATE $x \gets \lfloor x/2 \rfloor$
			\ENDFOR

			\STATE {\bfseries Output: $R$. } 
			
		\end{algorithmic}
	}
\end{algorithm}

Based on the observations, DCS and DCM quantile sketches keeps $\log U$ number of linear sketches, one for each dyadic interval. As a result, to update an item $x$ with value $v \in \{-1, +1\}$, DCS and DCM need to update $\log U$ levels: they first map item $x$ to a dyadic interval for the level and then update the corresponding linear sketch, as shown in Algorithm~\ref{alg:QuantileUpdate}. To estimate the rank of an item, DCS and DCM first break the range into at most $\log U$ dyadic intervals and then query the frequency for each interval from the corresponding linear sketch, as shown in Algorithm~\ref{alg:QuantileQuery}.

\section{Detailed Comparison for Private CountSketch}\label{app:compare}
Recall that for some item $x$, $\widehat{f}(x)$ is the output estimated frequency, $f(x)$ is the actual frequency and $\widetilde{f}(x)$ is the non-private estimated frequency (the output of the original Count-Min and CountSketch with the same set of hash functions). Let $k$ denote the number of rows in our counter, and we choose $k=\log\frac{2}{\beta}$ to bound the failure probability by $\beta$. We first state both \citet{pagh2022improved}'s results and ours.

Our uniform bound (Theorem \ref{thm:median}): $\sup_x |\widehat{f}(x)-\widetilde{f}(x)|\leq E=\widetilde{O}(\sqrt{\frac{k}{\rho}})$.\\
Our lower bound (Theorem \ref{thm:lower}): $\sup_x |\widehat{f}(x)-\widetilde{f}(x)|\geq \Omega(\sqrt{\frac{k}{\rho}})$.\\
\citet{pagh2022improved}'s point-wise bound\footnote{We reformulate the bound in \citep{pagh2022improved} for comparison.}: $|\widehat{f}(x)-f(x)|\leq \text{non-private error bound} + \widetilde{O}(\sqrt{\frac{1}{\rho}})$.

There is a major difference between the analysis in \citep{pagh2022improved} and ours. While \citet{pagh2022improved} focused on the point-wise bound for $|\widehat{f}(x)-f(x)|$ , we focused on the uniform bound for $\sup_{x}|\widehat{f}(x)-\widetilde{f}(x)|$. We are interested in the trade-off between privacy and accuracy for the CountSketch. In particular, we want to answer the question of what additional error will be imposed on the estimated frequency due to the Differential Privacy guarantee. As a result, our result shows an uniform bound $\sup_{x}|\widehat{f}(x)-\widetilde{f}(x)| \leq E$ where $E$ is a function of the desired accuracy, failure probability, and privacy guarantee. To derive the point-wise bound for $|\widehat{f}(x)-f(x)|$, we can simply combine our result with any point-wise bound for $|\widetilde{f}(x)-f(x)|$ (like the one in our work or \citep{pagh2022improved}) due to triangular inequality. However, \citet{pagh2022improved}'s analysis can not imply even point-wise bound for $|\widehat{f}(x)-\widetilde{f}(x)|$.

We agree that \citet{pagh2022improved} has a tight point-wise bound for $|\widehat{f}(x)-f(x)|$ by using the concentration of median. By comparing our analysis and \citep{pagh2022improved} for $|\widehat{f}(x)-f(x)|$, \citet{pagh2022improved}'s point-wise bound removes the $\sqrt{\log\frac{2}{\beta}}$ ($\beta$ denotes failure probability) in the lower order term $E$ which is added to the original estimation error $|\widetilde{f}(x)-f(x)|$. However, the difference may not be substantial. When the database is large (which is the usual case for why we need to perform approximations), $E$ is small compared to the $|\widetilde{f}(x)-f(x)|$. Even for the extreme case of setting $\beta = 1e-10$, the amplification factor for calculating $E$ is $\sqrt{\log\frac{2}{\beta}} \approx 5.8$, which $E$ is still very likely to be small compare to $|\widetilde{f}(x)-f(x)|$.

We believe that none of the two results dominate each other. Both \citet{pagh2022improved}'s point-wise bound for $|\widehat{f}(x)-f(x)|$ and our uniform bound for $|\widehat{f}(x)-\widetilde{f}(x)|$ are useful analysis for understanding the Differentially Private CountSketch with Gaussian noise. 

\section{Extension to the Data Stream Setting}\label{app:extension}
Our Differentially Private Linear Sketches only guarantee Differential Privacy for the query of database, i.e., the adversary is only allowed to query after the whole database passes our algorithm. However, queries for data stream is also quite practical in real-life applications, i.e., the adversary can query at any time. Take reinforcement learning (RL) as an instance, we can use differentially private linear sketches to estimate the visitation number of all (state,action) pairs while preserving privacy. If we only have linear sketches for database, we can only handle offline RL \citep{qiao2022offline}, while with linear sketches for data stream, we can deal with the more challenging online RL \citep{qiao2022sample}.

Our algorithms can be extended to the data stream setting with moderate modifications. Different from our approach of adding noise at the beginning (Algorithm \ref{alg:DP-initialization}), we need to add noise after each item passes our algorithm. To guarantee Differential Privacy under data stream, we can apply the tree-based algorithm (as shown in \citet{kairouz2021practical}) to add Gaussian noises to continuous data. In this way, the algorithm is Differentially Private no matter how many times the adversary queries the data stream and the additional error bound is the same scale as $E$ in our main theorems, with some extra multiplicative logarithmic terms.

\section{Lower Bound for the Additional Error due to Privacy}\label{app:lower}
In this section, we provide a lower bound for CountSketch (the counterpart for Count-Min is similar and we omit it here) showing that our analysis of $\sup_{x}|\widehat{f}(x)-\widetilde{f}(x)|$ is tight. For simplicity, we assume the counter arrays $C$ we use has shape $k\times d$ and $k$ is odd, which means our upper bound $E$ for $\sup_{x}|\widehat{f}(x)-\widetilde{f}(x)|$ is $\widetilde{O}(\sqrt{\frac{k}{\rho}})$. Then according to our Algorithm \ref{alg:DP-initialization}, the noise we add has scale (standard variance) $\sqrt{\frac{k}{\rho}}$. The following theorem shows that if the universe is large enough, with constant probability, $\sup_{x}|\widehat{f}(x)-\widetilde{f}(x)|\geq \Omega(\sqrt{\frac{k}{\rho}})$, which matches our upper bound in Theorem \ref{thm:median} up to logarithmic terms.

\begin{theorem}\label{thm:lower}
There exists constants $c,p$, such that if the size of universe satisfies $U\geq ckd(1+\frac{1}{d-1})^{k-1}$, for our Private CountSketch, there exists a database, with probability at least $p$, $$\sup_{x}|\widehat{f}(x)-\widetilde{f}(x)|\geq \sqrt{\frac{k}{\rho}}.$$
\end{theorem}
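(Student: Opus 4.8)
The plan is to exhibit a worst-case database for which, with constant probability over the hashes and the initialization noise, \emph{some} universe item sees its private estimate differ from its non-private estimate by at least the noise scale $\sigma:=\sqrt{k/\rho}$. \textbf{Construction.} I would let the database consist only of $m$ ``heavy'' items, the $i$-th inserted with a hugely increasing multiplicity (say $L^i$ for a large base $L\gg\sigma$) and $m=O(d\log k)$ (the exact $m$ is tuned to the regime of $k$ versus $d$). After processing, any cell $C[r,c]$ hit by at least one heavy item has absolute value essentially a distinct power of $L$ (dominated by the largest heavy item landing there), so in every row the cells touched by heavy items have magnitude $\gg\sigma$ and are pairwise separated by gaps $\gg\sigma$; call these cells \emph{large}. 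Since each heavy item touches one cell per row and the hashes are ideal, the statement ``the large cells cover a $(1-\epsilon)$-fraction of every one of the $k$ rows'' is a coupon-collector event whose probability can be pushed above a constant, for an appropriate $\epsilon$, via a Chernoff-type bound on the number of empty bins.

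\textbf{A single-Gaussian reduction.} Fix any ``generic'' item $x$ (not a heavy item) and condition on the event $\mathcal G_x$ that all $k$ of its cells are large; by independence across rows, $\mathbb{P}[\mathcal G_x\mid\text{heavy hashes}]=\prod_{r=1}^{k}\Phi_r\ge(1-\epsilon)^k$ where $\Phi_r$ is the large-fraction of row $r$. On $\mathcal G_x$ the $k$ query values $g_r(x)C[r,h_r(x)]$ are of the form $\pm(\text{a distinct power of }L)+\mathcal N(0,\sigma^2)$, and since consecutive magnitudes differ by a factor $L\gg\sigma$, adding the Gaussian noise does not reorder the $k$ values and hence does not change which cell attains the median. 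Therefore the private and non-private medians live in the \emph{same} cell, and $\widehat f(x)-\widetilde f(x)$ equals exactly the single $\mathcal N(0,\sigma^2)$ noise drawn into that cell. Consequently $\mathbb{P}\!\left[\,|\widehat f(x)-\widetilde f(x)|\ge\sigma\,\middle|\,\mathcal G_x,\text{heavy hashes}\right]=\mathbb{P}[|\mathcal N(0,1)|\ge1]>0.31$, so each generic $x$ satisfies $\mathbb{P}\!\left[\,|\widehat f(x)-\widetilde f(x)|\ge\sigma\,\middle|\,\text{heavy hashes}\right]\ge 0.31\,(1-\epsilon)^k$.

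\textbf{Union over the universe.} Conditioned on the heavy-item hashes and on the noise, the events $\{|\widehat f(x)-\widetilde f(x)|\ge\sigma\}$ for distinct generic $x$ depend on disjoint hash coordinates, hence are independent and (by symmetry) equiprobable with some value $q$. Thus on the good coverage event $\mathbb{P}[\exists x:\,|\widehat f(x)-\widetilde f(x)|\ge\sigma]\ge \text{const}\cdot\big(1-(1-c'(1-\epsilon)^k)^{\,U-m}\big)$, which is bounded below by an absolute constant $p>0$ once $U-m\gtrsim(1-\epsilon)^{-k}$. Choosing $\epsilon$ (equivalently $m$) optimally, one checks that $m+O((1-\epsilon)^{-k})$ never exceeds $c\cdot kd(1+\tfrac1{d-1})^{k-1}$ for a universal $c$: the candidate term is handled by $(1-\epsilon)^{k}\ge(1+\tfrac1{d-1})^{-k}$, and the heavy-item term by $kd\ge d\log k\gtrsim m$.

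\textbf{Main obstacle.} The crux is precisely this two-sided balancing in the last step. On one side $\epsilon$ must be small enough that $(1-\epsilon)^{-k}$ stays inside the $\Theta(kd(1+\tfrac1{d-1})^{k-1})$ universe budget; on the other side $\epsilon$ must be large enough that $m$ heavy items genuinely cover a $(1-\epsilon)$-fraction of \emph{all} $k$ rows with constant probability, which — because of the union over $k$ rows — forces one past McDiarmid to a sharp empty-bins concentration estimate, most delicately when $d\lesssim k^2$. Once the large, well-separated cell structure is secured, the reduction of $\widehat f(x)-\widetilde f(x)$ to a single Gaussian (and hence the per-item success probability $>0.31$) is immediate.
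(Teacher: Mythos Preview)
Your endgame --- isolating the median so that $\widehat f(x)-\widetilde f(x)$ reduces to a single $\mathcal N(0,\sigma^2)$ --- is exactly right and is also what the paper does. But the step that gets you there has a gap: you assert that on $\mathcal G_x$ the $k$ query values $g_r(x)C[r,h_r(x)]$ are $\pm$(\emph{distinct} powers of $L$), hence cannot be reordered by noise of scale $\sigma$. Within a single row the large cells are indeed dominated by pairwise distinct heavy items (each heavy item hits one cell per row), but the $k$ query values live in $k$ \emph{different} rows, and nothing stops two of those cells from being dominated by the same heavy item --- or even from containing exactly the same subset of heavy items with matching $g$-signs, so that the two query values coincide. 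When several query values are tied at the median, the private$-$non-private difference becomes an order statistic of several independent Gaussians rather than a single one, and its tail at $\sigma$ shrinks with the number of ties; your $0.31$ lower bound no longer applies. One could try to rule out ties (e.g.\ by arguing that with positive probability the lower-order heavy items break them by gaps $\gg\sigma$), but that needs a separate argument you have not given, and it interacts with the coupon-collector balancing you already flag as delicate.

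The paper avoids all of this with a much more direct construction. It fixes a single item $x$ up front and then, for each of $k-1$ rows $i$, uses the universe-size hypothesis to find an item $y_i$ that collides with $x$ in row $i$ only (that is, $h_i(y_i)=h_i(x)$ while $h_j(y_i)\neq h_j(x)$ for $j\ne i$) and has a prescribed sign $g_i(y_i)=\pm g_i(x)$. A uniformly random universe element meets these constraints with probability $\tfrac{1}{2d}(1-\tfrac{1}{d})^{k-1}$, so $U\ge ckd(1+\tfrac{1}{d-1})^{k-1}$ is exactly what is needed to locate all $k-1$ of them with constant probability. Inserting each $y_i$ with a common huge multiplicity $n_y$ then makes $x$'s non-private $\mathrm{arr}$ consist of $(k-1)/2$ entries equal to $n_x+n_y$, $(k-1)/2$ entries equal to $n_x-n_y$, and one entry equal to $n_x$: the median is $n_x$ and is isolated from its neighbors by $n_y\gg\sigma$ \emph{by design}, so the private median is $n_x$ plus the single Gaussian noise in that one clean row. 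No coupon-collector estimate, no union over probe items, and no tie-breaking analysis are needed.
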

\begin{proof}[Proof of Theorem \ref{thm:lower}]
Fix some item $x$ with its corresponding $\{h_i(x),g_i(x)\}_{i\in[k]}$. For each $i\in[k]$, we aim to find $y_i$ in universe such that $h_i(y_i)=h_i(x)$, $h_j(y_i)\neq h_j(x),\,\forall\,j\neq i$ and $g_i(y_i)=g_i(x)$ (this is replaced with $g_i(y_i)=-g_i(x)$ if $i\geq\frac{k+1}{2}$). For any item in the universe, due to the uniform randomness of its hash functions, the probability it satisfies such conditions is $Pr=\frac{1}{2d}(1-\frac{1}{d})^{k-1}$. Therefore, when $U\geq ckd(1+\frac{1}{d-1})^{k-1}$ for some constant $c$, with constant  probability, we can find $\{y_i\}_{i\in[k]}$ satisfying the previous conditions.

Next we can construct a database with only $\{x,y_1,\cdots,y_{k-1}\}$, where the frequency of $x$ is some large $n_x$, and the frequencies of all $y_i$'s are $n_y$ that satisfies that $n_y$ is much larger than $E$. Therefore, the arr of $x$ (without adding noise) consists of one $n_x$, $\frac{k-1}{2}$ numbers much larger than $n_x+E$ and $\frac{k-1}{2}$ numbers much smaller than $n_x-E$. Finally, with high probability, the change from $\widetilde{f}(x)$ to $\widehat{f}(x)$ is from $n_x$ to $n_x+\mathcal{N}(0,\frac{k}{\rho})$, which finishes our proof.
\end{proof}

\end{document}